\newtheorem{theorem}{Theorem}
\newtheorem{corollary}{Corollary}
\newcommand*{\bbC}{\mathbb{C}}
\newcommand*{\complex}{\bbC}
\newcommand*{\cE}{\mathcal{E}}
\newcommand*{\cL}{\mathcal{L}}
\newcommand*{\cS}{\mathcal{S}}
\newcommand*{\conv}{\mathrm{conv}}
\newcommand*{\id}{\openone}
\newcommand*{\tr}{\mathrm{tr}}
\newcommand*{\ket}[1]{| #1 \rangle}
\newcommand*{\bra}[1]{\langle #1 |}
\newcommand*{\proj}[1]{\ket{#1}\bra{#1}}
\newcommand*{\eps}{\varepsilon}
\newcommand*{\Sym}{\mathrm{S}}
\newcommand*{\LOCC}{\text{LOCC}}
\begin{document}

\title{Detection of Multiparticle Entanglement: \\ Quantifying the Search for Symmetric Extensions}


\author{Fernando~G.S.L.~Brand\~ao}
\affiliation{Departamento de F\'isica, Universidade Federal de Minas Gerais, Belo Horizonte, Brazil.}

\author{Matthias Christandl}
\affiliation{Institute for Theoretical Physics, ETH Zurich, Wolfgang-Pauli-Strasse 27, CH-8057 Zurich, Switzerland.}


\begin{abstract}
We provide quantitative bounds on the characterisation of multiparticle separable states by states that have locally symmetric extensions. The bounds are derived from two-particle bounds and relate to recent studies on quantum versions of de Finetti's theorem. We discuss algorithmic applications of our results, in particular a quasipolynomial-time algorithm to decide whether a multiparticle quantum state is separable or entangled (for constant number of particles and constant error in the LOCC or Frobenius norm). Our results provide a theoretical justification for the use of the Search for Symmetric Extensions as a practical test for multiparticle entanglement.
\end{abstract}

\maketitle

Entanglement between two particles is a fundamental resource in quantum communication theory, being of vital importance in quantum teleportation~\cite{BBCJPW93}, quantum key distribution~\cite{BB84, Eke91} as well as more exotic tasks such as the simulation of noisy channels by noiseless ones~\cite{BDHSW10, BCR11}. The most famous criterion to decide whether or not a state is entangled is the Peres-Horodecki test \cite{Per96, HHH96}: it is based on the observation that the partial transpose of a separable state is positive semi-definite and hence, if the partial transpose of a quantum state $\rho_{AB}$ is not positive semi-definite, then $\rho_{AB}$ must be entangled. Unfortunately, this criterion is only complete for two-by-two and two-by-three dimensional systems thanks to the famous entangled PPT states \footnote{A state is called PPT if its partial transpose is positive semidefinite.}~\cite{HHH98}. 

A hierarchy of separability criteria that detects every entangled state is the Search for Symmetric Extensions~\cite{DPS04}. This hierarchy is based on the observation that if $\rho_{AB}$ is separable, i.e.~of the form $\sum_i p_i \proj{\phi_i}_A\otimes \proj{\psi_i}_B$, then for every $k$, we can define the state $\sum_i p_i \proj{\phi_i}_A^{\otimes k} \otimes \proj{\psi_i}_B$ which is manifestly symmetric under the permutation of the $A$ systems and extends the original state $\rho_{AB}$~\footnote{We say that a state $\rho_{CD}$ extends $\rho_{C}$ if $\tr_D \rho_{CD}=\rho_C$.}. Hence if for some $k$ a given state $\rho_{AB}$ does not have an extension to $k$ copies of $A$ that is symmetric under interchange of the copies of $A$, then it must be entangled. The $k$'th separability criterion is thus the search for a symmetric extension to $k$ copies of $A$. Quantum versions of the famous de Finetti theorem from statistics show that this hierarchy of criteria is complete~\cite{Sto69, HM76, RW89, Wer89, CFS08} --- i.e.~every entangled state fails to have a symmetric extension for some $k$ --- and even provide quantitative bounds for the distance to the set of separable states measured in the trace norm~\cite{KR05, CKMR07} (see Figure~\ref{fig:sets}).

Interestingly, these bounds can be improved if we restrict the hierarchy to PPT states as has been shown in~\cite{NOP09a, NOP09b} following a proposal to use the search for such extensions by semidefinite programming as a test to detect bipartite entanglement~\cite{DPS04}. Whereas the algorithm works well in practice, from the bounds one can only infer a runtime exponential in the dimension the state, suggestively in agreement with the well-known result that the separability problem is NP-hard~\cite{Gur04, Gha10}. In recent work, we have shown together with Jon Yard that the algorithm runs in quasipolynomial time (even without the PPT constraints) for constant error when one is willing to consider the weaker LOCC or Frobenius norms~\cite{BCY10a, BCY10b}. The LOCC norm is an operationally defined norm giving the optimal probability of distinguishing two two-particle states by local operations and classical communication~\footnote{An algorithm has an error of at most $\varepsilon$ if it identifies all separable states correctly as well as all states that have at least distance $\varepsilon$ to the set of separable states in the chosen norm.}. 

Following their work in the two-particle case, Doherty et al.~proposed a similar search for extensions in order to detect multiparticle entanglement~\cite{DPS05}. With this Letter we provide a quantitative analysis of this proposal. We do this by deriving a bound on the distance between multiparticle states that have symmetric extensions and multiparticle separable states in terms of the corresponding two-particle bounds. We illustrate the analysis by considering the best known two-particle bounds for the norms mentioned above. As in the two-particle case, the LOCC and Frobenius norm result is shown to imply a \textit{quasipolynomial-time algorithm} for the detection of entanglement for a constant number of parties and for constant error. The practical use of the Search for Symmetric Extensions as multiparticle entanglement criteria has therefore been given a theoretical underpinning with this work. 

We also show how our results can lead to novel quantum versions of de Finetti's theorem. Whereas we do not obtain new insights for the trace norm, we obtain a de Finetti theorem in the LOCC norm that depends only logarithmically on the local dimension. This stands in sharp contrast to the trace norm case, where the dependence on the local dimension is at least linear~\cite{ CKMR07}.

The Letter is structured as follows. First we introduce the notation needed to study symmetric extensions of multiparticle quantum states. We then provide the quantitative analysis of the search for extensions as a criterion for entanglement and separability and discuss algorithmic applications. Subsequently, we derive the novel quantum de Finetti theorem before concluding this Letter with a discussion of the practical implications of this work.

\begin{figure}
\includegraphics[width=0.5\textwidth]{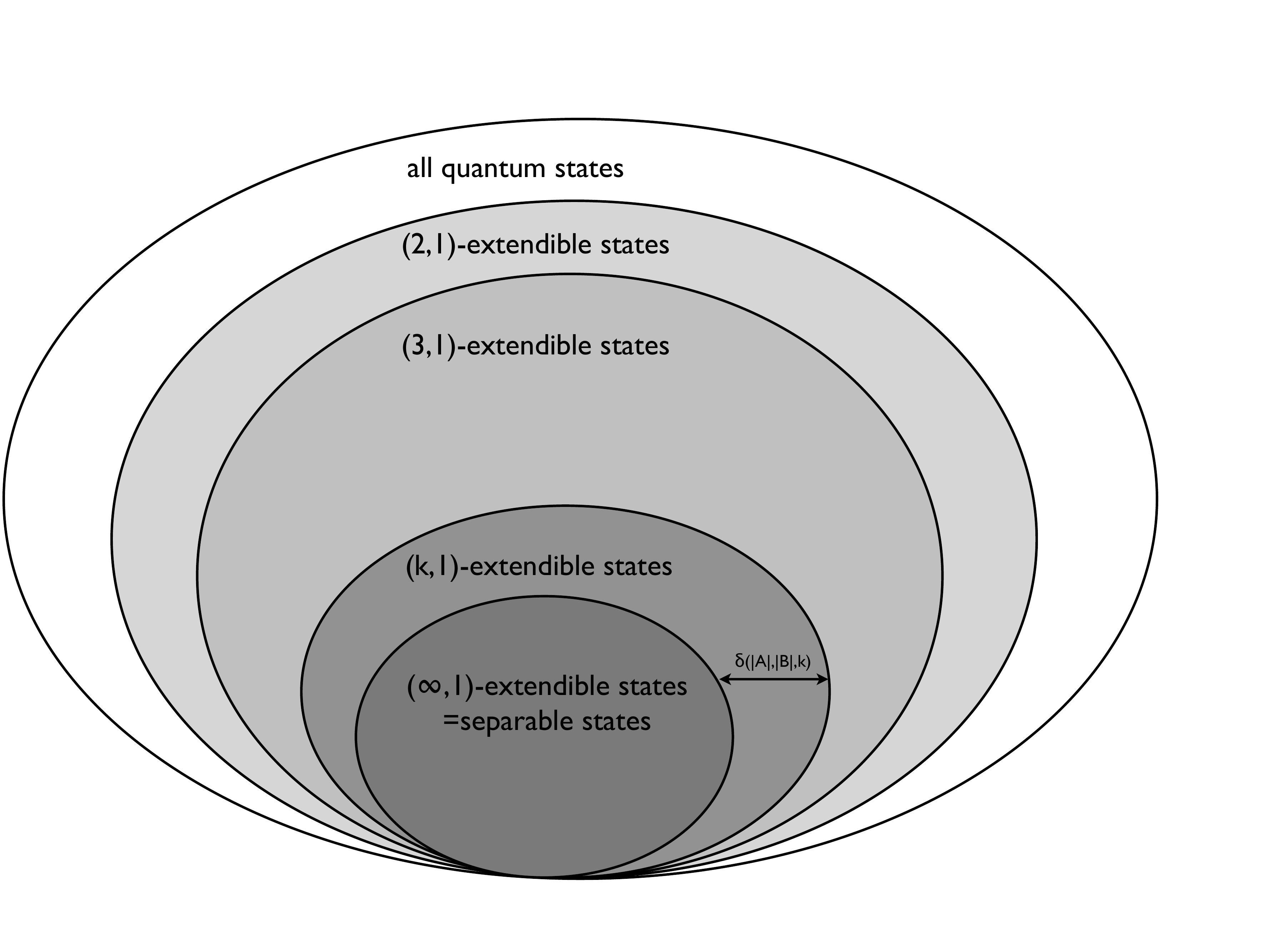} 
\caption{Illustration of the hierarchy for two particles}
\label{fig:sets}
\end{figure}

\vspace{0.3 cm}

\noindent \textbf{Symmetric Extensions:} We denote by $A_1, A_2, \cdots $ Hilbert spaces of finite but possibly different dimension $|A_i|$. We let $\cS_{A_1:A_2: \cdots :A_N}:= \conv \{ \proj{\phi_1}_{A_1} \otimes \proj{\phi_2}_{A_2 }   \cdots  \otimes \proj{\phi_N}_{A_N}\}$ be the set of separable states, where $\conv$ denotes the convex hull. We also define the convex sets of symmetrically extendible states $\cE_{A_1:A_2:  \cdots :A_N}^{k_1, k_2, \cdots k_N}$ consisting of all $\rho_{A_1 \cdots A_N}$ for which there is a state $\rho_{\Sym^{k_1}(A_1)\cdots \Sym^{k_N}(A_N)}$ with 
$$\rho_{A_1 \cdots A_N}=\tr_{A^{k_1-1}_1 \cdots A^{k_N-1}_N} \left( \rho_{\Sym^{k_1}(A_1)\cdots \Sym^{k_N}(A_N)} \right).$$
Here, $\Sym^k(A)$ denotes the symmetric subspace of $A^k \equiv A^{\otimes k}$ and $\tr_{A^{k-1}}$ stands for the partial trace of all but one of the $A$ systems~\footnote{For a permutation $\pi$ of $k$ elements define $U_\pi$ by $U_\pi \ket{i_1}\ket{i_2} \cdots \ket{i_k} = \ket{i_{\pi^{-1}(1)}}\ket{i_{\pi^{-1}(2)}}\cdots \ket{i_{\pi^{-1}(k)}} $.  $\Sym^k(A)$ then equals the vector space consisting of all $\ket{\psi} \in A^{\otimes k}$ that satisfy $U_\pi \ket{\psi}=\ket{\psi}$ for all $\pi$.}. 

In order to measure distances between quantum states we consider a norm $||* ||$ that is defined for all spaces of linear operators $\cL(A_1  \otimes \cdots \otimes A_N)$ and that may depend on the decomposition into tensor factors (here indicated by colons) satisfying the following compatibility conditions: For all finite dimensional $A_1, A_2, \cdots, A_N, A_1', A_2', \cdots, A_N', A'$ and for all completely positive trace preserving maps $\Lambda_i: \cL(A_i) \rightarrow  \cL(A_i')  $ we have
\begin{align} 
||\Lambda_1 \otimes \Lambda_2 \otimes \cdots \otimes  \Lambda_N (*)  & ||_{A'_1:A'_2: \cdots : A'_N } \nonumber \\ 
\label{norm}
&  \leq  || * ||_{A_1:A_2: \cdots : A_N} .
\end{align}
and 
\begin{align} \label{norm2} ||* ||_{A':A_1:A_2: \cdots : A_N } \leq  || * ||_{A'A_1:A_2: \cdots : A_N} .
\end{align}

An example of a norm which satisfies the two conditions is the trace norm which can be written in the form $$\Vert X \Vert_1= \max_{0 \leq M \leq \id} \tr((2M - \id) X).$$ 
Note that it is independent of the split of the total Hilbert space into tensor products. 
A second norm satisfying the conditions is the LOCC norm, defined in analogy with the trace norm as
\begin{equation*}
\Vert X \Vert_{\LOCC} := \max_{M \in \LOCC} \tr((2M - \id) X),
\end{equation*}
where $\LOCC$ is the convex set of matrices $0 \leq M \leq \id$ such that there is a two-outcome measurement $\{M, \id - M \}$ that can be realized by LOCC. Note that the LOCC norm does depend on the tensor product split.

We say that $\delta\equiv \delta(|A|, |B|, k)$ is a \emph{two-particle bound} for a norm $||* ||$ if for all $\rho_{AB} \in \cE^{k, 1}_{A:B}$ there exists $\sigma \in \cS_{A:B}$ with  (see Figure~\ref{fig:sets})
$$||\rho-\sigma||_{A:B}\leq \delta(|A|, |B|, k). $$
Note that $\delta(|A|, |B|, k)$ does not equal $\delta(|B|, |A|, k)$ in general. In fact, all known bounds either depend only on $|A|$ or only on $|B|$. \newline

\noindent \textbf{Main Results:} We derive two results that quantify the closeness of a separable state to a symmetrically extendible multiparticle state in terms of two-particle bounds. The first result is tailored to a two-particle bound that only depends on the dimension of $A$. For notational simplicity, we disregard the dimension of $B$ by upper bounding it with $\infty$.

\begin{theorem} \label{thm:main}
Let $||*||$ be a norm that satisfies \eqref{norm} and \eqref{norm2} and assume that $\delta(|A|, |B|, k)$ is a two-particle bound for $||*||$. Then for all $\rho\in  \cE_{A_1:A_2: \cdots :A_N}^{k_1, k_2, \ldots, k_{N}}$ there exists $\sigma \in \cS_{A_1:A_2: \cdots :A_N}$ with
$$||\rho -\sigma||_{A_1:A_2: \cdots : A_N} \leq \sum_{i=1}^{N-1} \delta \left(|A_i|,\infty, k_i\right).$$
\end{theorem}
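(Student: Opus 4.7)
The plan is to proceed by induction on the number of parties $N$. The base case $N=2$ is just the definition of a two-particle bound applied to $\rho \in \cE^{k_1,1}_{A_1:A_2}$, with the understanding that $\delta(|A_1|,|A_2|,k_1) \leq \delta(|A_1|,\infty,k_1)$ by the convention of the theorem statement.

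For the inductive step, let $\tilde \rho$ denote the joint symmetric extension of $\rho$ on $\Sym^{k_1}(A_1)\otimes \cdots \otimes \Sym^{k_N}(A_N)$, and trace out $k_1-1$ of the $A_1$ copies to obtain $\rho' \in \cL(A_1 \otimes B)$ with $B := \Sym^{k_2}(A_2)\otimes\cdots\otimes \Sym^{k_N}(A_N)$. By construction $\rho' \in \cE^{k_1,1}_{A_1:B}$, so the two-particle bound supplies a separable state $\sigma' = \sum_j p_j \proj{\phi_j}_{A_1}\otimes \omega_j^B$ with $\|\rho'-\sigma'\|_{A_1:B} \leq \delta(|A_1|,\infty,k_1)$. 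The key observation that drives the induction is that each $\omega_j$ is supported on the full symmetric product $\Sym^{k_2}(A_2)\otimes\cdots\otimes\Sym^{k_N}(A_N)$, so its reduction $\tau_j := \tr_{A_2^{k_2-1}\cdots A_N^{k_N-1}}\omega_j$ automatically lies in $\cE^{k_2,\ldots,k_N}_{A_2:\cdots:A_N}$, and the inductive hypothesis applies.

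It yields $\sigma_j \in \cS_{A_2:\cdots:A_N}$ with $\|\tau_j-\sigma_j\|_{A_2:\cdots:A_N} \leq \sum_{i=2}^{N-1}\delta(|A_i|,\infty,k_i)$, and I then form the candidate $\sigma := \sum_j p_j \proj{\phi_j}_{A_1}\otimes \sigma_j \in \cS_{A_1:\cdots:A_N}$, estimating the error via the intermediate $\sigma_0 := \sum_j p_j \proj{\phi_j}_{A_1}\otimes \tau_j$. For $\|\rho-\sigma_0\|$, I first refine the $A_1:B$ split into $A_1:\Sym^{k_2}(A_2):\cdots:\Sym^{k_N}(A_N)$ using \eqref{norm2}, then contract each $\Sym^{k_i}(A_i)$ down to $A_i$ by the partial-trace CPTP map and invoke \eqref{norm}; this gives $\|\rho-\sigma_0\|_{A_1:\cdots:A_N} \leq \delta(|A_1|,\infty,k_1)$. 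For $\|\sigma_0-\sigma\|$, the triangle inequality combined with the CPTP state-preparation map $\complex \to \cL(A_1)$ that outputs $\proj{\phi_j}_{A_1}$ (again via \eqref{norm}) yields $\|\proj{\phi_j}_{A_1}\otimes(\tau_j-\sigma_j)\|_{A_1:A_2:\cdots:A_N} \leq \|\tau_j-\sigma_j\|_{A_2:\cdots:A_N}$; summing against the weights $p_j$ and invoking the triangle inequality once more completes the induction.

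The main obstacle is not conceptual but rather the careful bookkeeping of how the different tensor-factor decompositions interact under \eqref{norm} and \eqref{norm2}: the $A_1:B$-norm bound coming from the two-particle hypothesis and the $A_2:\cdots:A_N$-norm bound coming from the inductive hypothesis must both be transported into the target norm $\|\cdot\|_{A_1:\cdots:A_N}$, which requires splitting, merging, and contracting factors in the right order. The conceptual trick that keeps this painless is preserving the full symmetric structure on $A_2,\ldots,A_N$ when invoking the two-particle bound across the $A_1$ cut, so that extendibility of the conditional states on the $B$-side comes for free and the single-step residual is exactly one two-particle bound $\delta(|A_1|,\infty,k_1)$; iterating over the parties then produces the telescoping sum.
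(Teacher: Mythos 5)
Your proposal is correct and is essentially the paper's own proof: the authors likewise peel off $A_1$ via the two-particle bound applied across the cut $A_1:\Sym^{k_2}(A_2)\cdots\Sym^{k_N}(A_N)$, observe that the conditional states remain supported on the symmetric subspaces (hence extendible), and iterate over the parties before combining the estimates with the triangle inequality, \eqref{norm} and \eqref{norm2}. Your only presentational difference is organizing the iteration as a formal induction on $N$ rather than unrolling it with multi-indexed conditional decompositions, which changes nothing of substance.
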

\begin{proof} 
By assumption there exists an extension $\rho_{\Sym^{k_1}(A_1)\Sym^{k_2}(A_2)\cdots \Sym^{k_N}(A_N)}$ of $\rho_{A_1A_2\cdots A_N}$. Since clearly $\rho_{A_1\Sym^{k_2}(A_2)\cdots \Sym^{k_N}(A_N)}\in \cE_{A_1:\Sym^{k_2}(A_2)\cdots \Sym^{k_N}(A_N)}^{k_1,1} $ there exists a state $ \sigma_{A_1\Sym^{k_2}(A_2)\cdots \Sym^{k_N}(A_N)}$ of the form
\begin{align*}
 \sigma_{A_1\Sym^{k_2}(A_2)\cdots \Sym^{k_N}(A_N)}=\sum_{i_1} p_{i_1} \chi^{i_1}_{A_1}\otimes \rho^{i_1}_{\Sym^{k_2}(A_2)\cdots \Sym^{k_N}(A_N)}
 \end{align*}
  such that 
\begin{align*}
||  \rho -\sigma  ||_{A_1:\Sym^{k_2}(A_2)\cdots \Sym^{k_N}(A_N) }    \leq  \delta (|A_1|, \infty, k_1).
\end{align*}
We now apply the same reasoning to each of the $ \rho^{i_1}_{\Sym^{k_2}(A_2)\cdots \Sym^{k_N}(A_N)}$ and find that there are states
 $$\sigma^{i_1}_{A_2\Sym^{k_3}(A_3)\cdots \Sym^{k_N}(A_N)}=\sum_{i_2} p_{i_2|i_1} \chi^{i_1i_2}_{A_2}\otimes \rho^{i_1i_2}_{\Sym^{k_3}(A_3)\cdots \Sym^{k_N}(A_N)}$$ satisfying
\begin{align*}
||\rho^{i_1}- \sigma^{i_1}  & ||_{A_2:\Sym^{k_3}(A_3)\cdots \Sym^{k_N}(A_N) } \leq  \delta (|A_2|, \infty, k_2).
\end{align*}
We continue this way until
\begin{align*}
||\rho^{i_1i_2 \cdots i_{N-2}}-  \sigma^{i_1i_2 \cdots i_{N-2}} &  ||_{A_{N-1}: \Sym^{k_N}(A_N)}\\
& \leq  \delta (|A_{N-1}|, \infty,k_{N-1})
\end{align*}
for 
\begin{align*}
\sigma^{i_1i_2 \cdots i_{N-2}}_{A_{N-1}\Sym^{k_N}(A_N)} & =\sum_{i_{N-1}} p_{i_{N-1}|i_1i_2 \cdots i_{N-2}}   \\
& \quad \times \chi^{i_1i_2\cdots i_{N-1}}_{A_{N-1}} \otimes \rho^{i_1i_2 \cdots i_{N-1}}_{\Sym^{k_N}(A_N)}.  
\end{align*}
In all estimates we now take the partial trace over the remaining extensions and append $\chi_{A_1}^{i_1} \otimes \chi_{A_2}^{i_1i_2} \otimes \cdots \otimes  \chi_{A_j}^{i_1i_2 \cdots i_j}$ to the states $\rho^{i_1\cdots i_j}_{A_{j+1}\cdots A_N}$ and $\sigma^{i_1\cdots i_j}_{A_{j+1}\cdots A_N}$. Since both operations are CPTP maps, this keeps the estimates valid due to $\eqref{norm}$. Then we convert all the bounds into the norm $||*||_{A_1:A_2:\cdots :A_3}$ using~\eqref{norm2}. Finally, we combine all the estimates with help of the triangle inequality.
\end{proof}

A first corollary is obtained by combining the theorem with the quantum de Finetti theorem from \cite{CKMR07}. From it, we recover the first part of the statement~\cite[Theorem 1]{DPS04} when all $k_i$ approach infinity~\footnote{The uniqueness of the probability measure (second part of the statement) does not hold in general in this formulation because there exist different decompositions of a separable quantum state. The use of uniqueness of the de Finetti theorem is for the same reason also not applicable in the proof of the first part of the statement.}.

\begin{corollary}
For all $\rho\in  \cE_{A_1:A_2: \cdots :A_N}^{k_1, \ldots, k_{N}}$ there exists $\sigma \in \cS_{A_1:A_2: \cdots :A_N}$ with
$$||\rho-\sigma ||_1\leq 4 \sum_{i=1}^{N-1} \frac{|A_i|}{k_i} .$$
In particular, for $N$ systems of identical dimension $d$, there is an algorithm for deciding separability up to error $\eps$ in the trace norm running in time $\exp(O(d N \log \frac{N}{\eps}))$.
\end{corollary}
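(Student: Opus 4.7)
The first half of the corollary follows from Theorem~\ref{thm:main} by importing the bipartite quantum de Finetti bound of~\cite{CKMR07}, which states that any $\rho_{AB}\in\cE^{k,1}_{A:B}$ is within trace distance $4|A|/k$ of $\cS_{A:B}$. This supplies a two-particle bound $\delta(|A|,|B|,k)=4|A|/k$ that is independent of $|B|$, so $\delta(|A_i|,\infty,k_i)=4|A_i|/k_i$, and Theorem~\ref{thm:main} yields the claimed inequality $\sum_{i=1}^{N-1}4|A_i|/k_i$ at once.

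For the algorithmic part, assume each $|A_i|=d$ and fix a target error $\varepsilon$. I would take $k_i=k:=\lceil 4(N-1)d/\varepsilon\rceil$ uniformly, so that the bound just proved certifies every $\rho\in\cE^{k,\ldots,k}_{A_1:\cdots:A_N}$ to be within trace distance $\varepsilon$ of $\cS$. The decision procedure is the semidefinite feasibility problem: does there exist a positive semidefinite unit-trace operator $\omega$ on $\Sym^{k}(A_1)\otimes\cdots\otimes\Sym^{k}(A_N)$ with $\tr_{A_1^{k-1}\cdots A_N^{k-1}}\omega=\rho$? Output \emph{separable} iff this SDP is feasible. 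Truly separable inputs admit such an extension and are accepted, while any input at trace distance greater than $\varepsilon$ from $\cS$ cannot be extended (contrapositive of the first half) and is rejected.

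It remains to count the cost. Since $\dim\Sym^k(A_i)=\binom{k+d-1}{d-1}\leq (k+d)^{d-1}$, the SDP variable lives in a space of dimension at most $(k+d)^{N(d-1)}$. Standard interior-point methods solve such a feasibility problem in time polynomial in the matrix dimension, yielding overall runtime
$$\exp\bigl(O(Nd\log(k+d))\bigr)=\exp\bigl(O(dN\log(N/\varepsilon))\bigr)$$
after substituting $k=O(Nd/\varepsilon)$. No genuine obstacle is anticipated: the argument is a direct composition of Theorem~\ref{thm:main} with an off-the-shelf bipartite de Finetti bound and a standard SDP-complexity estimate. The only bookkeeping item is to verify that imposing the bosonic symmetric-subspace constraint on each factor (rather than the full tensor power) does not inflate the parameter count beyond the stated exponential, which is immediate from the binomial bound above.
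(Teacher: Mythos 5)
Your proposal is correct and follows essentially the same route as the paper: import the two-particle bound $\delta(|A|,|B|,k)=4|A|/k$ from~\cite{CKMR07}, apply Theorem~\ref{thm:main}, then choose $k=O(Nd/\eps)$ and bound the SDP size via the dimension of the symmetric subspaces. The only cosmetic difference is that the paper leaves the last system unextended (searching over $\Sym^k(A_1)\otimes\cdots\otimes\Sym^k(A_{N-1})\otimes A_N$, consistent with the sum running to $N-1$) while you symmetrically extend all $N$ factors, which does not affect the stated complexity.
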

\begin{proof}
The claim follows from the bound $\delta(|A|, |B|, k)=4\frac{|A|}{k} $ in trace norm obtained in~\cite[Theorem II.8']{CKMR07}. In order to investigate a bound on the runtime of the algorithm that searches for symmetric extensions, fix an error $\eps>0$, in other words, choose $k=\lceil \eps^{-1} (N-1) d \rceil $. The dimension of the space in which we search for an extension then has dimension 
\begin{align*}
d |\Sym^k(\complex^d)|^{N-1}& \leq d (k+1)^{d-1} \leq \exp( O(d N \log \frac{dN}{\eps})).
\end{align*} 
Since the search for an extension is a semidefinite programme running in time polynomial in the number of variables (which is quadratic in the dimension), we obtain the claimed time complexity (cf.~\cite{BCY10b} for more details in the bipartite case).
\end{proof}

A second corollary of Theorem~\ref{thm:main} concerns the usefulness of combining the search for symmetric extensions with the PPT test. For this let  $\cE_{A_1:A_2: \cdots :A_N}^{k_1, k_2 \ldots, k_{N}, PPT}$ denote the set of states that have a symmetric extension which is furthermore PPT with respect to any partition of the particles into two groups \footnote{In fact, for our purpose it will suffice to demand that the extension be PPT across the cuts $\Sym^{k_i}(A_i):\prod_{j=i+1}^N\Sym^{k_i}(A_i)$ and $\Sym^{k_i/2}(A_i):\Sym^{k_i/2}(A_i)\prod_{j=i+1}^N\Sym^{k_i}(A_i)$.}.
\begin{corollary}
For all $\rho\in  \cE_{A_1:A_2: \cdots :A_N}^{k_1, k_2 \ldots, k_{N}, PPT}$ there exists $\sigma \in \cS_{A_1:A_2: \cdots :A_N}$ with
$$||\rho - \sigma ||_1\leq O\left(\sum_{i=1}^{N-1} \frac{|A_i|^2}{k_i^2} \right). $$
\end{corollary}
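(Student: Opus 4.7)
The plan is to mimic the proof of Theorem~\ref{thm:main} verbatim, replacing the generic two-particle bound $\delta(|A|,|B|,k)$ at each inductive step by the PPT-improved two-particle bound of Navascu\'es, Owari and Plenio~\cite{NOP09a,NOP09b}. Their result says roughly that if $\rho_{AB}$ has a symmetric extension on $\mathrm{Sym}^{k}(A)\otimes B$ that is PPT across a suitable cut, then there is $\sigma\in\cS_{A:B}$ with $\Vert\rho-\sigma\Vert_1=O(|A|^2/k^2)$. Feeding this $\delta_{\mathrm{PPT}}(|A|,\infty,k)=O(|A|^2/k^2)$ into the telescoping estimate of Theorem~\ref{thm:main} immediately yields the bound claimed in the corollary.

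The nontrivial point, and the reason the corollary is not an instantaneous consequence of Theorem~\ref{thm:main}, is to verify that the PPT property needed by the NOP bound is preserved at every stage of the peeling argument. At step $i$ the argument treats $\rho^{i_1\cdots i_{i-1}}_{A_i\Sym^{k_{i+1}}(A_{i+1})\cdots\Sym^{k_N}(A_N)}$ as a bipartite state with $A_i$ on one side and the remaining symmetric systems on the other, and we need a symmetric extension of this bipartite state that is PPT across the cut $\Sym^{k_i/2}(A_i):\Sym^{k_i/2}(A_i)\prod_{j>i}\Sym^{k_j}(A_j)$. This is exactly what the strengthened definition of $\cE^{k_1,\ldots,k_N,PPT}$ in the footnote provides at the outset; the two cuts listed there are tailored so that after tracing out the already-handled systems $A_1^{k_1-1},\ldots,A_{i-1}^{k_{i-1}-1}$ and conditioning on the classical labels $i_1,\ldots,i_{i-1}$, the resulting conditional state still admits a PPT symmetric extension across the required bipartition. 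Commutation of the partial transpose on one subsystem with the partial trace over a disjoint subsystem, together with the fact that convex combinations of PPT states are PPT, ensures that these properties descend intact through the induction.

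The main obstacle is therefore purely bookkeeping: one has to match, at each level of the recursion, the cuts required by the PPT-enhanced two-particle bound to the cuts listed in the footnote. Once that correspondence is in place, the chain of inequalities of Theorem~\ref{thm:main} runs unchanged, the triangle inequality combines the $N-1$ errors additively, and the compatibility conditions~\eqref{norm} and~\eqref{norm2} convert every intermediate bound into the global trace norm, giving $\Vert\rho-\sigma\Vert_1\leq\sum_{i=1}^{N-1}\delta_{\mathrm{PPT}}(|A_i|,\infty,k_i)=O\!\left(\sum_{i=1}^{N-1}|A_i|^2/k_i^2\right)$ as claimed.
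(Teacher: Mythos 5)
Your proposal matches the paper's proof: both run the recursive peeling argument of Theorem~\ref{thm:main} with the Navascu\'es--Owari--Plenio bound $O(|A|^2/k^2)$ in place of the generic $\delta$, and both identify the only real issue as checking that the conditional states $\rho^{i_1\cdots i_j}$ inherit the PPT property needed to reapply the bound at the next level --- which holds because they arise by post-selection on systems disjoint from the relevant cut, exactly the cuts singled out in the footnote. Your phrasing via commutation of the partial transpose with operations on a disjoint subsystem is just a slightly more explicit rendering of the paper's ``obtained by post-selection, hence still PPT'' remark, so this is essentially the same argument.
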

\begin{proof}
We wish to use the two-particle bound of Navascues et al.~\cite{NOP09a, NOP09b} in the proof of Theorem~\ref{thm:main}. In order to do so, we need to assert that the states $\rho^{i_1i_2 \cdots i_j}$ that are being constructed with help of the construction of \cite{NOP09a, NOP09b} are again PPT so that we can use their bound again. But this is guaranteed since by inspection of the proof in \cite{NOP09a, NOP09b} the states are obtained by post-selection and hence retain their property of being PPT since the extension is PPT across the cut $\Sym^{k_i}(A_i):\prod_{j=i+1}^N\Sym^{k_i}(A_i)$.
\end{proof}

The following theorem is an alternative to Theorem~\ref{thm:main} tailored to two-particle bounds that only depend on the dimension of $B$. It will be particularly useful when considering the LOCC and Frobenius norms.

\begin{theorem} \label{thm:main2}
Let $||*||$ be a norm that satisfies \eqref{norm} and \eqref{norm2}, assume that $\delta(|A|, |B|, k)$ is a two-particle bound for $||*||$ and set $(k_1, k_2, \cdots k_{N-1},1):=(\ell_1 \ell_2 \cdots \ell_{N-1},  \ell_2  \ell_3 \cdots \ell_{N-1}, \cdots, \ell_{N-1}, 1)$. For all $\rho\in  \cE_{A_1:A_2: \cdots :A_N}^{k_1, \ldots, k_{N}}$ there exists $\sigma \in \cS_{A_1:A_2: \cdots :A_N}$ with
$$||\rho -\sigma ||_{A_1:A_2: \cdots :A_N}\leq \sum_{i=1}^{N-1} \delta \left(\infty, |A_{i+1}| , \ell_i\right).$$
\end{theorem}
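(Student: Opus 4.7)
My plan is to mimic the proof of Theorem~\ref{thm:main} but in reverse order, peeling off particles from $A_N$ inward to $A_1$, using \emph{composite} inner systems whose symmetric-extension structure is inherited from the multiparticle extension. The key combinatorial fact is the pair of symmetric-subspace inclusions $\mathrm{Sym}^{ab}(H)\subseteq \mathrm{Sym}^{b}(\mathrm{Sym}^{a}(H))$ and $\mathrm{Sym}^{k}(H_1)\otimes \mathrm{Sym}^{k}(H_2)\subseteq \mathrm{Sym}^{k}(H_1\otimes H_2)$, both immediate from $S_k$-invariance. Iterating these together with the hypothesis $k_i=\ell_i k_{i+1}$ (with $k_N=1$), one sees that the given extension actually lies in the nested symmetric subspace $\mathrm{Sym}^{\ell_{N-1}}(B_{N-1})\otimes A_N$, where the composite systems are defined recursively by $B_1:=A_1$ and $B_i:=\mathrm{Sym}^{\ell_{i-1}}(B_{i-1})\otimes A_i$ for $i=2,\ldots,N-1$.

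With the extension now read as a state on $\mathrm{Sym}^{\ell_{N-1}}(B_{N-1})\otimes A_N$, its marginal $\rho_{B_{N-1}A_N}$ lies in $\cE^{\ell_{N-1},1}_{B_{N-1}:A_N}$, so the two-particle bound applied with $A=B_{N-1}$ and $B=A_N$ yields a separable state
\[\sigma=\sum_{i_{N-1}} p_{i_{N-1}}\,\xi^{i_{N-1}}_{B_{N-1}}\otimes \eta^{i_{N-1}}_{A_N}\]
satisfying $\|\rho_{B_{N-1}A_N}-\sigma\|_{B_{N-1}:A_N}\leq \delta(|B_{N-1}|,|A_N|,\ell_{N-1})\leq \delta(\infty,|A_N|,\ell_{N-1})$. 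Crucially, each $\xi^{i_{N-1}}$ is by construction a state on the composite $B_{N-1}=\mathrm{Sym}^{\ell_{N-2}}(B_{N-2})\otimes A_{N-1}$, so its marginal on $B_{N-2}\otimes A_{N-1}$ automatically lies in $\cE^{\ell_{N-2},1}_{B_{N-2}:A_{N-1}}$---exactly the ``symmetric $B$-side'' trick used in the proof of Theorem~\ref{thm:main}. I would then recurse, peeling off $A_{N-1},A_{N-2},\ldots,A_2$ one at a time, with the step that extracts $A_{j+1}$ contributing error $\delta(\infty,|A_{j+1}|,\ell_{j})$, terminating when the innermost step separates $B_1=A_1$ from $A_2$.

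Following the book-keeping of Theorem~\ref{thm:main}, each local estimate is lifted to the full $\cL(A_1\cdots A_N)$ by partial-tracing the unused symmetric copies and appending the previously extracted single-particle states $\eta^{\cdots}$ (both CPTP maps, so \eqref{norm} preserves the bounds); \eqref{norm2} then weakens each $\|\cdot\|_{B_{j-1}:A_j}$-estimate into the finer-split $\|\cdot\|_{A_1:\cdots:A_N}$-estimate. A triangle inequality sums the $N-1$ contributions into the claimed total.

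The principal obstacle is controlling the nested composite structure: one must verify that at each stage the extracted inner states $\xi^{i_{N-1}\cdots i_j}$ genuinely live in the nested symmetric subspace $B_j$, so that the next application of the two-particle bound is legitimate. This relies on the two symmetric-subspace inclusions above, together with the observation that although the composite dimensions $|B_{i-1}|$ may be enormous, they do not enter the final estimate because the two-particle bound is used in its $|A|\to\infty$ form (as will be the case for the LOCC and Frobenius norm applications).
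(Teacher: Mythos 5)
Your proposal is correct and follows essentially the same route as the paper's proof: peel off $A_N, A_{N-1},\ldots,A_2$ from the outside in, at each stage viewing the remaining extension as an $\ell_i$-fold symmetric extension of a composite system $B_i$ so that the two-particle bound applies in its $\delta(\infty,|A_{i+1}|,\ell_i)$ form, then combine via CPTP monotonicity, \eqref{norm2} and the triangle inequality. The only (cosmetic) difference is that you define the composite systems recursively as nested symmetric subspaces $B_i=\mathrm{Sym}^{\ell_{i-1}}(B_{i-1})\otimes A_i$, whereas the paper uses the flat products $\mathrm{Sym}^{k_1/\ell_{N-1}}(A_1)\cdots A_{N-1}$; your version makes the inclusions you cite explicit, which the paper leaves implicit.
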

\begin{proof}
By assumption there exists an extension
$\rho_{\Sym^{k_1}(A_1)\Sym^{k_1}(A_2) \cdots \Sym^{k_{N-1}}(A_{N-1})A_N}$  of  $\rho_{A_1A_2\cdots A_N}$. Since clearly $\rho_{B_{N-1}A_N} \in \cE^{\ell_{N-1}, 1}_{B_{N-1}A_N}$, where 
\begin{align*}
B_{N-1}:= \Sym^{k_1/\ell_{N-1}} &(A_1)\Sym^{k_2/\ell_{N-1}}(A_2) \\
& \cdots \Sym^{k_{N-2}/\ell_{N-1}}(A_{N-2})A_{N-1} ,
\end{align*}
there exists a state
$$\sigma_{B_{N-1}A_N}=\sum_{i_{N-1}} p(i_{N-1}) \rho^{i_{N-1}}_{B_{N-1}}\otimes \sigma^{i_{N-1}}_{A_N}$$
satisfying
$$||\rho_{B_{N-1}A_N}-\sigma_{B_{N-1}A_N}||\leq \delta(\infty, |A_N|, \ell_{N-1}).$$
We then repeat the same argument for the states $ \rho^{i_{N-1}}_{B_{N-1}}$ thereby decoupling system $A_{N-1}$ from 
\begin{align*}
B_{N-2}:= & \Sym^{k_1/(\ell_{N-2}\ell_{N-1})} (A_1)\Sym^{k_2/(\ell_{N-2}\ell_{N-1})}(A_2) \\
& \qquad \cdots \Sym^{k_{N-3}/(\ell_{N-2}\ell_{N-1})}(A_{N-3})A_{N-2}. 
\end{align*}
We continue this way until we have decoupled $A_2$ from $B_1:=A_1$. We then combine all the estimates with help of the triangle inequality and properties \eqref{norm} and \eqref{norm2} which proves the claim.
\end{proof}

The following corollary of Theorem~\ref{thm:main2} 
shows that detecting multiparticle separability is much more efficient than what was previously anticipated. 
\begin{corollary}
Set $(k_1, k_2, \cdots k_{N-1},1):=(\ell_1 \ell_2 \cdots \ell_{N-1},  \ell_2  \ell_3 \cdots \ell_{N-1}, \cdots, \ell_{N-1}, 1)$; then for all $\rho\in  \cE_{A_1:A_2: \cdots :A_N}^{k_1, \ldots, k_{N}}$ there exists $\sigma \in \cS_{A_1:A_2: \cdots :A_N}$ with
$$||\rho-\sigma ||_{LOCC(A_1: A_2: \cdots :A_N)}\leq \frac{1}{8  \ln 2} \sum_{i=1}^{N-1} \sqrt{\frac{\log |A_i|}{\ell_i}},$$ 
and 
$$||\rho-\sigma ||_{2}\leq \frac{\sqrt{153}}{8 \ln 2} \sum_{i=1}^{N-1} \sqrt{\frac{\log |A_i|}{\ell_i}}.$$ 
In particular, deciding separability up to error $\eps$, in LOCC or Frobenius norm, can be done in time $\exp(O(\eps^{-2(N-1)}N^{2N-1}\prod_{i=1}^N \log |A_i| ))$, i.e.~in quasipolynomial-time for constant error and a constant number of parties. 
\end{corollary}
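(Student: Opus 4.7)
The plan is to invoke Theorem~\ref{thm:main2} with the two-particle bounds for the LOCC and Frobenius norms established by Brand\~ao--Christandl--Yard in~\cite{BCY10a, BCY10b}, which take the form $\delta(|A|, |B|, k) \leq c\sqrt{\log |B|/k}$ with $c = 1/(8\ln 2)$ and $c = \sqrt{153}/(8\ln 2)$ respectively, the dependence being on the dimension of the unextended system. Plugging these termwise into the sum $\sum_{i=1}^{N-1}\delta(\infty, |A_{i+1}|, \ell_i)$ produced by Theorem~\ref{thm:main2} immediately yields the two displayed inequalities, modulo the natural relabelling of the summation index.

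For the algorithmic claim, fix accuracy $\eps > 0$ and choose each $\ell_i$ so that the $i$-th summand in the LOCC bound is at most $\eps/(N-1)$; this gives
\[
  \ell_i \;=\; \Bigl\lceil c^2 (N-1)^2 \eps^{-2} \log |A_i| \Bigr\rceil \;=\; O\bigl(N^2 \eps^{-2}\log|A_i|\bigr) .
\]
The corresponding $k_i = \ell_i \ell_{i+1}\cdots\ell_{N-1}$ then satisfy $k_i \leq O\bigl((N^2/\eps^2)^{N-i}\prod_{j=i}^{N-1}\log|A_j|\bigr)$, with $k_1$ the largest. The symmetric extension lives in a space of dimension $\prod_{i=1}^{N}|\Sym^{k_i}(A_i)|$; using the standard estimate $|\Sym^k(A)| \leq (k+|A|)^k$ together with the fact that each $k_i$ is polylogarithmic in the local dimensions, we obtain $\log|\Sym^{k_i}(A_i)| = O(k_i \log|A_i|)$. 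The total log-dimension $\sum_{i=1}^N k_i \log|A_i|$ is dominated by the $i=1$ term up to a factor of $N$, hence is at most $O\bigl(N\cdot(N^2/\eps^2)^{N-1}\prod_{j=1}^N \log|A_j|\bigr) = O\bigl(\eps^{-2(N-1)} N^{2N-1}\prod_{j=1}^N\log|A_j|\bigr)$. Since the search for a symmetric extension is an SDP feasibility problem of size polynomial in the total dimension (cf.~\cite{BCY10b}), the claimed runtime $\exp(O(\eps^{-2(N-1)} N^{2N-1}\prod_j \log|A_j|))$ follows.

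The step requiring the most care is the dimension bookkeeping. A priori the multiplicative structure of the $k_i$ could force a much worse exponent, and one has to notice two things: that $\log|\Sym^{k_i}(A_i)|$ stays linear in $k_i$ rather than in $|A_i|$ (because $k_i$ is polylogarithmic in the local dimensions), and that the telescoping sum $\sum_i k_i \log|A_i|$ is within a factor $N$ of its largest term $k_1 \log|A_1|$, yielding the precise $N^{2N-1}\eps^{-2(N-1)}$ dependence rather than, say, something like $N^{O(N^2)}$. Everything else is a direct combination of Theorem~\ref{thm:main2}, the BCY two-particle bounds, and standard estimates on symmetric-subspace dimensions and SDP runtimes.
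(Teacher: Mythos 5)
Your LOCC bound and your runtime analysis follow essentially the paper's route: apply Theorem~\ref{thm:main2} with the two-particle LOCC bound $\delta(|A|,|B|,k)=\frac{1}{8\ln 2}\sqrt{\log|B|/k}$ from \cite{BCY10a}, choose $\ell_i = \Theta(N^2\eps^{-2}\log|A_{i+1}|)$, and bound the dimension of the extension space by $\prod_i |A_i|^{k_i}$; your bookkeeping of the telescoping products $k_i=\ell_i\cdots\ell_{N-1}$ and the resulting exponent $\eps^{-2(N-1)}N^{2N-1}\prod_i\log|A_i|$ is correct.

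There is, however, a genuine gap in how you obtain the Frobenius-norm inequality. You propose to ``invoke Theorem~\ref{thm:main2} with the two-particle bound for the Frobenius norm'' with constant $\sqrt{153}/(8\ln 2)$. But Theorem~\ref{thm:main2} is only available for norms satisfying conditions \eqref{norm} and \eqref{norm2}, and the Frobenius norm does \emph{not} satisfy \eqref{norm}: it is not contractive under CPTP maps (e.g.\ $\Vert \tr_B(\proj{0}\otimes \id_d)\Vert_2 = d > \sqrt{d} = \Vert \proj{0}\otimes \id_d\Vert_2$), and the proof of Theorem~\ref{thm:main2} repeatedly traces out extension systems and appends states, each step justified by \eqref{norm}. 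So plugging a ``Frobenius two-particle bound'' into the theorem is not a valid move. The correct argument (and the one the paper uses) is to run the entire chaining argument in the LOCC norm, which does satisfy \eqref{norm} and \eqref{norm2}, and only at the very end convert the accumulated LOCC estimate into a Frobenius estimate via the inequality $\Vert * \Vert_2 \leq \sqrt{153}\,\Vert * \Vert_{\LOCC}$ of \cite{MWW09}. Your final constant is the right one, but the justification needs this reordering. A second, minor point: your symmetric-subspace estimate $|\Sym^k(A)|\leq (k+|A|)^k$ only yields $\log|\Sym^{k_i}(A_i)|=O(k_i\log|A_i|)$ when $k_i$ is at most polynomial in $|A_i|$, which fails for small $\eps$ or large $N$; the uniform bound $|\Sym^k(A)|\leq |A|^k$ gives $k_i\log|A_i|$ unconditionally and should be used instead.
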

\begin{proof}
The claim follows from the bound $\delta(|A|, |B|, k)=\frac{1}{8 \ln 2} \sqrt{\frac{\log |B|}{k}} $ in LOCC norm obtained in~\cite[Corollary 2]{BCY10a}. Since the Frobenius norm does not satisfy \eqref{norm}, we carefully go through the proof with the LOCC norm, trace out where needed and then replace all norms by Frobenius norms using the following identity between them, proved in \cite{MWW09}: $\Vert * \Vert_{LOCC} \geq \frac{1}{\sqrt{153}} \Vert * \Vert_2$.

In order to investigate a bound on the runtime of the algorithm that searches for symmetric extensions, fix an error $\eps>0$. Setting $\ell_i:= \frac{1}{(8 \ln 2)^2} (N-1)^2\epsilon^{-2}\log |A_{i+1}|$ we see that the error is at most $\eps$.  The runtime then equals a polynomial in the number of variables, which is smaller than  $|A_N| |A_{N-1}|^{\ell_{N-1}} \cdots |A_{1}|^{\ell_{1}\ell_{2}\cdots \ell_{N-1}}=\exp(O((N-1)^{2N-1} \epsilon^{2(N-1)}\prod_{i=1}^N \log |A_i|))$
since $|\Sym^k(A)|\leq |A|^k$\footnote{The latter is a good bound for large $|A|$, a regime in which we are interested.}. 
\end{proof}

\vspace{0.3 cm}
\noindent \textbf{Recovering a quantum de Finetti theorem:} The final result shows that one can obtain a quantum de Finetti theorem from the bounds derived above. 
\begin{theorem}
\label{thm:defin}
Let $n\leq N \leq k$ and $||* ||$ be a norm satisfying \eqref{norm}, \eqref{norm2} and $||* ||\leq ||* ||_1$. For all permutation-invariant states $\rho_{A^{k}}$ \footnote{A state is permutation-invariant if for all permutations $\pi$ of $k$ elements $U_{\pi} \rho_{A^{k}} U_{\pi}^\dagger =\rho_{A^{k}}$.} there exists a state $\sigma_{A^k}=\sum_i p_i \sigma_i^{\otimes k} $ with 
$$||\rho_{A^n} -\sigma_{A^k}||_{A:A:\cdots :A}\leq (N-1) \delta (\infty, |A|, k^{\frac{1}{N}})+2\frac{n^2}{N}$$
\end{theorem}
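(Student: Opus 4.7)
The plan is to apply Theorem~\ref{thm:main2} to an $N$-party decomposition of a marginal of $\rho_{A^k}$ and then convert the resulting separable approximation on $N$ particles into an i.i.d.~approximation on $n$ particles via a sampling-without-replacement-versus-with-replacement argument.

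First I would set $\ell_i := \lceil k^{1/N}\rceil$ for $i = 1,\ldots,N-1$, so that $k_i = \ell_i\ell_{i+1}\cdots \ell_{N-1}=\lceil k^{1/N}\rceil^{N-i}$ and $k_N=1$; for $k$ sufficiently large relative to $N$ this satisfies $\sum_i k_i \le k$. By permutation invariance, the marginal of $\rho_{A^k}$ on the first $\sum_i k_i$ systems is permutation invariant overall, and in particular invariant under permutations within each of the $N$ blocks of sizes $k_1,\ldots,k_N$. I would then view this marginal as an extension of the $N$-party state $\rho_{A_1:\cdots:A_N}$ obtained by keeping one system per block (with each $A_i\cong A$) witnessing membership in $\cE^{k_1,\ldots,k_N}_{A_1:\cdots:A_N}$. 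Theorem~\ref{thm:main2} with all $A_{i+1}=A$ and $\ell_i=\lceil k^{1/N}\rceil$ then yields a separable approximation $\tau_{A_1:\cdots:A_N}=\sum_j q_j\,\sigma_j^{(1)}\otimes\cdots\otimes \sigma_j^{(N)}$ with
$$\|\rho_{A^N}-\tau\|_{A:A:\cdots :A}\le (N-1)\,\delta(\infty,|A|,k^{1/N}).$$

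Next I would symmetrize $\tau$ over the $N!$ permutations of its factors; since $\rho_{A^N}$ inherits permutation invariance from $\rho_{A^k}$ and the norm is compatible with the tensor structure via \eqref{norm} and \eqref{norm2}, this does not worsen the bound. Writing $\bar\sigma_j := \frac{1}{N}\sum_{i=1}^N \sigma_j^{(i)}$, the $n$-party marginal of the symmetrized $\bar\tau$ is, by symmetry, exactly the mixture describing sampling $n$ of the $N$ blocks \emph{without} replacement from each term $\sigma_j^{(1)}\otimes\cdots\otimes \sigma_j^{(N)}$, while $\sum_j q_j\,\bar\sigma_j^{\otimes n}$ corresponds to sampling \emph{with} replacement. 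The standard classical coupling between the two sampling procedures (applied in each block of the convex combination) bounds their trace-norm difference by $2\binom{n}{2}/N\le n^2/N$ and, after absorbing small numerical factors, by $2n^2/N$ in our norm thanks to $\|{*}\|\le \|{*}\|_1$.

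Setting $\sigma_{A^k}:=\sum_j q_j\,\bar\sigma_j^{\otimes k}$, whose $n$-party marginal is $\sum_j q_j\,\bar\sigma_j^{\otimes n}$, the triangle inequality delivers the claimed bound. The main obstacle I anticipate is the discrepancy between ``permutation-invariant extension'' (which is what $\rho_{A^k}$ directly provides on its marginals) and ``extension supported on $\Sym^{k_1}(A_1)\otimes\cdots\otimes\Sym^{k_N}(A_N)$'' (which is how $\cE^{k_1,\ldots,k_N}$ was defined). Resolving this likely requires passing to a purification $|\Psi\rangle\in\Sym^k(A\otimes A')$ of the permutation-invariant $\rho_{A^k}$, and then checking that the dimension entering the bound stays $|A|$ rather than $|A\otimes A'|$; this is plausible because $\delta(\infty,|A|,\cdot)$ depends only on the dimension of the \emph{single} system being approximated, which is unaffected by the purification of the multi-copy side.
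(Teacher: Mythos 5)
Your proof follows essentially the same route as the paper: the paper's own proof is the one-liner ``apply Theorem~\ref{thm:main2} and then \cite[Theorem 6]{TC09}'', and your sampling-with-versus-without-replacement coupling is precisely the content of that cited finite de Finetti theorem (contributing the $2n^2/N$ term), while your choice $\ell_i=\lceil k^{1/N}\rceil$ matches the stated bound. The symmetric-subspace versus permutation-invariance issue you flag at the end (and the attendant question of whether purification turns $|A|$ into $|A\otimes A'|$ in the two-particle bound) is a real subtlety, but the paper glosses over it entirely, so your treatment is if anything more careful than the original.
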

\begin{proof} 
We first apply Theorem~\ref{thm:main2} and subsequently \cite[Theorem 6]{TC09}.
\end{proof}

A direct consequence of the previous theorem is a de Finetti theorem in LOCC norm which only depends \textit{logarithmically} on the local dimension. 
\begin{corollary}
Let $n\leq N \leq k$.  For all permutation-invariant states $\rho_{A^{k}}$ there exists a state $\sigma_{A^k}=\sum_i p_i \sigma_{A, i}^{\otimes k} $ with 
$$||\rho_{A^k}-\sigma_{A^k}||_{\LOCC(A: A: \cdots :A)}\leq (N-1) \frac{\sqrt{\log |A|} }{k^{\frac{1}{2N}}}+2\frac{n^2}{N}$$
which goes to zero for $n << N << k$ (e.g. for constant $n$, $N=\sqrt{\log k}$ and $k\rightarrow \infty$).
\end{corollary}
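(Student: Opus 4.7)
The plan is to apply Theorem~\ref{thm:defin} directly, with $||*||=||*||_{\LOCC}$. First I would verify the hypotheses: the LOCC norm was already shown (in the paragraph following~\eqref{norm2}) to satisfy the compatibility conditions \eqref{norm} and \eqref{norm2}, and the inequality $||*||_{\LOCC}\leq||*||_1$ is immediate from the definitions, since every LOCC-realizable two-outcome measurement is also a global two-outcome measurement, so the maximum in the LOCC norm is taken over a subset of the maximum defining the trace norm.

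Next I would substitute the two-particle LOCC bound of~\cite{BCY10a} (already used in the previous corollary), namely $\delta(|A|,|B|,k)=\frac{1}{8\ln 2}\sqrt{(\log|B|)/k}$, applied with extension parameter $k^{1/N}$. This gives
$$\delta(\infty,|A|,k^{1/N})=\frac{1}{8\ln 2}\,\frac{\sqrt{\log|A|}}{k^{1/(2N)}},$$
so Theorem~\ref{thm:defin} yields the stated bound (the multiplicative constant $\frac{1}{8\ln 2}$ is absorbed).

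For the asymptotic claim I would verify the worked example $n$ fixed, $N=\sqrt{\log k}$, $k\to\infty$. Since $k^{1/(2N)}=\exp(\tfrac{1}{2}\sqrt{\log k})$ grows super-polynomially in $N$, the first term is of order $\sqrt{\log k}\cdot\sqrt{\log|A|}\cdot\exp(-\tfrac{1}{2}\sqrt{\log k})\to 0$, while the second term is $2n^2/\sqrt{\log k}\to 0$.

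The only mild technical obstacle is that Theorem~\ref{thm:defin} takes an integer extension parameter, whereas $k^{1/N}$ need not be an integer; this is handled by replacing $k^{1/N}$ with $\lfloor k^{1/N}\rfloor$, which affects only the implicit constants and preserves the limiting behaviour. The required ordering $n\leq N\leq k$ is automatic in any regime where the bound is meaningful (in particular for the example above, once $k$ is large enough that $\sqrt{\log k}\geq n$). In short, the corollary is an immediate specialisation of Theorem~\ref{thm:defin} to the LOCC norm, with no new ideas beyond inserting the known two-particle bound.
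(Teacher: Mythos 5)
Your proposal is correct and takes essentially the same route as the paper, whose proof is the one-line observation that the claim follows from Theorem~\ref{thm:defin} together with the two-particle LOCC bound of \cite{BCY10a}. You have merely filled in the same substitution $\delta(\infty,|A|,k^{1/N})=\frac{1}{8\ln 2}\sqrt{\log|A|}\,k^{-1/(2N)}$ (with the constant $\frac{1}{8\ln 2}<1$ absorbed) and the routine verification of the norm hypotheses.
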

\begin{proof} The claim follows directly from Theorem~\ref{thm:defin} and \cite[Corollary 2]{BCY10a}.
\end{proof}

\vspace{0.3 cm}
\noindent \textbf{Discussion:}  Fast algorithms for deciding separability of quantum states are important both from a theoretical perspective in quantum information theory and from the point of view of experimental work in the generation of multiparticle quantum states. There one is typically interested in showing that the state created is entangled, as a proof of a successful implementation of the experiment. It is therefore of great interest to devise better algorithms for separability and to understand the strentghs and limitations of current techniques. In this work we have shown that the detection of multiparticle entanglement can be done much faster than what was previously thought. Our result gives conclusive evidence for the use of the Search for Symmetric Extensions as a practical test of separability and we hope it fosters further theoretical and experimental investigation on the detection of multiparticle entanglement.

\acknowledgements
FB is supported by a ``Conhecimento Novo" fellowship from FAPEMIG. MC is supported by the Swiss National Science Foundation (grant PP00P2-128455) and the German Science Foundation (grants~CH 843/1-1 and CH 843/2-1). 

\end{document}